\documentclass{ifacconf}

\usepackage{graphicx}      
\usepackage{natbib}        
\usepackage{amsmath}      
\usepackage{mathrsfs}
\usepackage{comment}

\usepackage{amssymb} 

\newtheorem{theorem}{Theorem}

\usepackage{algorithm}
\newtheorem{assumption}{Assumption}
\newtheorem{remark}{Remark}

\usepackage{booktabs}
\usepackage{algorithmic}

\begin{document}
\begin{frontmatter}

\title{Gradient-Based Adaptive Prediction and Control for Nonlinear Dynamical Systems\thanksref{footnoteinfo}} 
\thanks[footnoteinfo]{This paper was supported by the National Natural Science Foundation of China under Grants T2293770 and 12288201, and the China Postdoctoral Science Foundation 2024M763469.}

\author[1]{Yujing Liu}
\author[1]{Xin Zheng}
\author[1]{Zhixin Liu} 
\author[1]{Lei Guo}

\address[1]{State Key Laboratory of Mathematical Sciences, Academy of Mathematics and Systems Science, 
Chinese Academy of Sciences, Beijing 100190, China 
(e-mail: liuyujing@amss.ac.cn, zhengxin2021@amss.ac.cn, lzx@amss.ac.cn, lguo@amss.ac.cn).}

\begin{abstract}      
This paper investigates gradient-based adaptive prediction and control for nonlinear stochastic dynamical systems under a weak convexity condition on the prediction-based loss. 
This condition accommodates a broad range of nonlinear models in control and machine learning such as saturation functions, sigmoid, ReLU and tanh activation functions, and standard classification models. 
Without requiring any persistent excitation of the data, we establish global convergence of the proposed adaptive predictor and derive explicit rates for its asymptotic performance. 
Furthermore, under a classical nonlinear minimum-phase condition and with a linear growth bound on the nonlinearities, we establish the convergence rate of the resulting closed-loop control error. 
Finally, we demonstrate the effectiveness of the proposed adaptive prediction algorithm on a real-world judicial sentencing dataset. The adaptive control performance will also be evaluated via a numerical simulation.
\end{abstract}

\begin{keyword}
Adaptive prediction, Adaptive control, Global convergence, Weak convexity.
\end{keyword}

\end{frontmatter}

\section{Introduction}
\label{submission}

Adaptive prediction and control for dynamical systems with structure uncertainties is one of the fundamental problems in control theory and machine learning (cf., \cite{goodwin2014adaptive, chen2012identification, Astrom1995AdaptiveControl}). 
In the era of big data, the effective dimension of data streams can be extremely large and algorithms are often required to operate in an online version under stringent computational constraints.
In this context, stochastic gradient (SG)-type algorithms have become particularly attractive due to their significantly lower computational cost and ability to process data sequentially, compared with many alternative methods (cf., \cite{bottou2018optimization,zhou2020towards}).
Motivated by these progress, it is natural to seek adaptive prediction and control guarantees for SG-based algorithms applied to nonlinear stochastic dynamical systems with unknown parameters.

For general nonlinear dynamical systems, the adaptive prediction problem has been extensively studied under some idealized statistical assumptions or persistent excitation (PE) conditions.
For example, in most of the machine learning and statistics literature, prediction problems are typically formulated within a deterministic optimization or statistical learning framework, where one seeks to minimize an expected loss under the assumption that the data are independent and identically distributed or, more generally, stationary and ergodic (see, e.g., \cite{hazan2007logarithmic,hazan2015beyond,hardt2018gradient}).
Besides these, \cite{lai1991adaptive} established the logarithmic upper bound of the accumulated adaptive prediction error for the nonlinear least squares algorithm under a strong PE data condition, i.e., the matrix $\frac{1}{n}\nabla f(\phi_k,\theta)\nabla f(\phi_k,\theta)^{\tau}$ converges to a positive definite matrix. 
However, such conditions can hardly be verified or satisfied in many practical systems, especially stochastic in systems with feedback control, where the regressor is strongly coupled with past outputs and parameter estimates (cf., \cite{lei2020feedback}).

Fortunately, the convergence of adaptive prediction and control problem has been well established for linear systems or weakly nonlinear systems without any excitation condition on the regressor data.
For example, for linear systems, \cite{guo1995convergence} established the logarithmic upper bound of the least squares (LS)-based accumulated adaptive prediction error for either any bounded random data or for data generated from the closed-loop adaptive control systems under the well-known LS-based self-tuning regulators. 
In addition, for the SG algorithm, the convergence of the prediction error as well as the convergence and optimality analysis of the SG-based adaptive controller have been established (\cite{goodwin1981discrete}).
As for nonlinear systems, under a strong convexity assumption on the mean-square loss, \cite{hazan2007logarithmic} established the logarithmic upper bound of the accumulated adaptive prediction error of the Newton-type algorithm, which can be regarded as an extension of results in LS algorithm.
Recently, for stochastic systems with saturated observations, \cite{dai2025judicialsentencingpredictionbased} proposed an accelerated SG algorithm and achieve a faster convergence rate for the adaptive prediction error without relying on any PE conditions. 

In summary, most existing studies on the adaptive prediction and control for stochastic systems often rely either on strong convexity of the loss function or on a certain PE condition on the information matrix, which cannot be satisfied in many real-world systems, including feedback control systems. To the best of our knowledge, there still a lack of theoretical guarantees that characterizes the adaptive prediction and control capabilities of SG-based algorithms for nonlinear systems without requiring such strong convexity or PE data conditions.

To this end, we investigate the adaptive prediction and control performance for a class of nonlinear dynamical system under a weak convexity assumption, which can cover a broad class of nonlinear models, including saturation nonlinearities, sigmoid, ReLU and tanh activations, and standard classification models in machine learning. The main contributions of this paper are summarized as follows:

\begin{itemize}
\item Firstly, we propose a new gradient-based adaptive identification algorithm for a class of nonlinear dynamical systems, where the step-size is designed to accelerate the convergence of parameter estimates with an additional modified term to overcome the possible unboundedness of the gradient norm.

\item Secondly, we establish theoretical guarantees of adaptive prediction for the proposed algorithm. Under a weak convexity condition of the loss function, we provide the global convergence rate of the adaptive predictors without any excitation condition on the data, which makes it applicable to feedback control systems. Furthermore, it is shown that the convergence rate is faster than the classical SG algorithm.

\item Finally, based on the convergence of the adaptive predictor, we establish the convergence rate of adaptive control for a class of nonlinear stochastic systems under a classical nonlinear minimum phase condition and with a linearly-growing nonlinearity assumption, which is an extension of the framework in \cite{xie1998adaptive}.
\end{itemize}

The remainder of this paper is organized as follows. Section \ref{problem formulation}  presents the general nonlinear stochastic model, notations and assumptions. Section \ref{main results} clarifies the proposed algorithms and main theorems. Section \ref{Experiment} demonstrates the advantages of the adaptive control algorithm with a simulation example, and the proposed predictor with a real sentencing dataset.
Finally, Section \ref{conclusion} concludes this paper with some remarks.

\section{Problem Formulation} \label{problem formulation}
In this section, we will give the notations, the nonlinear stochastic model as well as the assumptions.

\subsection{Notations}
In this paper, \(\|\cdot\| \) denotes the the Euclidean norm of a matrix or vector. For a matrix \(A\), \(\lambda_{\min }\{A\}\) and \(\lambda_{\max}\{A\} \) denote the minimum  and maximum eigenvalues of \(A\) respectively, \(A^{\tau}\) denotes the transpose of \(A\). 
For two real sequences $\{a_k, k\geq0\}, \{b_k, k \geq 0\}$ with $b_k>0$, $a_k =o(b_k)$ means that $a_k/b_k \rightarrow 0$ as $k \rightarrow \infty$,  $a_k=O(b_k)$ means that there exists a positive constant $C$ such that $|a_k| \leq C b_k$ for all $k\geq0$.
In a probability space $(\Omega, \mathcal{F},P)$, $\Omega$ is the sample space, $\mathcal{F}$ is the $\sigma$-algebra of events on $\Omega$, and $P$ is a probability measure on $(\Omega, \mathcal{F})$. Besides, \(\{\mathcal{F}_k, k\geq 0\}\) represents a sequence of non-decreasing \(\sigma\)-algebras, and \(\mathbb{E}[\cdot \mid \mathcal{F}_k]\) denotes the conditional expectation. The random sequence \(\{x_k, \mathcal{F}_k\}\) is called adapted if \(x_k\) is \(\mathcal{F}_k\)-measurable for all \(k \geq 0\). 

\subsection{Model and Assumptions} \label{SCM}
We consider the following nonlinear stochastic dynamical model:
\begin{equation}\label{sclassmodel}
y_{k+1}=g(\phi_k,\theta^*,e_{k+1}),
\end{equation}
where \(g:\mathbb{R}^d \times \mathbb{R}^d \times \mathbb{R} \to \mathbb{R}\) (with \(d \geq 1\)) is a nonlinear function, \(y_{k+1} \in \mathbb{R}\) represents the observation, \(\phi_k \in \mathbb{R}^d\)  is the regression vector, \(e_{k+1} \in \mathbb{R}\) is a random noise, and \(\theta^* \in \mathbb{R}^d\) is an unknown parameter vector to be estimated. 

In order to design an adaptive identification algorithm, we consider the following prediction-based loss function:
\begin{equation}\label{loss}
\min_{\theta} \sum\limits_{k=1}^nJ_k(\theta), \quad J_k(\theta)=\mathcal{L}\left(f(\phi_k,\theta^*),f(\phi_k,\theta)\right),
\end{equation}
where \(\mathcal{L}(\cdot, \cdot)\) quantifies the loss between the optimal predictor \(f(\phi_k,\theta^*)\) based on the true parameter \(\theta^*\) and the predictor \(f(\phi_k,\theta)\) based on the estimated parameter \(\theta\). The optimal predictor \(f(\phi_k,\theta^*)\) is derived from the model \(g(\phi_k, \theta^*, e_{k+1})\) under a suitable optimality principle. Since the true parameter \(\theta^*\) is unknown \textit{a priori}, we institute $\theta^*$ by its estimate \(\theta\) in practice, which yields the predictor \(f(\phi_k,\theta)\). Naturally, our goal is to minimize the discrepancy between the optimal predictor \(f(\phi_k,\theta^*)\) and the predictor \(f(\phi_k,\theta)\), as shown in \eqref{loss}.
The following remark provides several illustrative examples of how \(f(\phi_k,\theta^*)\) can be constructed.

\begin{remark}\label{rm1}
If the loss function is the classical mean-squares loss, i.e., $\mathcal{L}(y_{k+1},x)=(y_{k+1}-x)^2$ with $x$ being $\mathcal{F}_k$-measurable, then the optimal predictor $f(\phi_k,\theta^*)= \mathbb{E}[y_{k+1}|\mathcal{F}_k]$ is the minimizer of the conditional mean square error.
Similarly, if $\mathcal{L}(y_{k+1},x)=|y_{k+1}-x|$, then the minimizer $f(\phi_k,\theta^*)$ is the conditional median of $y_{k+1}$ given $\mathcal{F}_k$, where $\mathcal{F}_k=\sigma\{\phi_t,y_t,e_t,t\leq k\}$.
Typical choices for $\mathcal{L}(\cdot,\cdot)$ also include cross-entropy, the negative log-likelihood function and the Quartile loss, leading to a corresponding optimal predictor $f(\phi_k,\theta^*)$ under the relevant optimality criterion.
\end{remark}

Based on the online-accessed data, the aim of this paper is to design an adaptive predictor to accurately predict the future output, and to further develop an adaptive controller so that the close-loop system can achieve the optimal control performance.

For the theoretical analysis, we denote the gradient noise sequence as
\begin{equation}\label{www}
w_{k+1}=\nabla_x\mathcal{L}(y_{k+1},f(\phi_k,\theta_k))-\nabla_x\mathcal{L}(f(\phi_k,\theta^*),f(\phi_k,\theta_k)),
\end{equation}
where $\nabla_x \mathcal{L}(y,x)=\frac{\partial \mathcal{L}(y,x)}{\partial x}$. Besides, we introduce the following assumptions:
\begin{assumption}\label{asm1}
The noise $\{w_{k},\mathcal{F}_k\}$ is a martingale difference sequence
with $\sup_{k\geq0}\mathbb{E}\left[w_{k+1}^2|\mathcal{F}_k\right]\triangleq \sigma^2<\infty$, and there exists $\alpha>2$ such that $\sup_{k\geq0}\mathbb{E}\left[|w_{k+1}|^{\alpha}|\mathcal{F}_k\right]<\infty$.
\end{assumption}

\begin{assumption}\label{asm2}
For each $k\geq0$, there exist positive constants $\delta, c_1, c_2$ such that
\begin{equation}\label{fun1}
\nabla J_k(\theta)^{\tau}(\theta-\theta^*)\geq \delta J_k(\theta),
\end{equation}
and
\begin{equation}\label{fun2}
(\nabla_x\mathcal{L}(f(\phi_k,\theta^*),f(\phi_k,\theta)))^2\leq c_1J_k(\theta)+c_2.
\end{equation}
\end{assumption}
\begin{remark}\label{remark1}
We note that when $\delta = 1$, condition~\eqref{fun1} is equivalent to the standard convexity condition. 
It is clear that Assumption~\ref{asm2} is weaker than the strong convexity, the $\alpha$-exp-concave assumption in \cite{hazan2007logarithmic}, and the convexity condition on the loss $\mathcal{L}$ such as quadratic, least absolute deviation, Huber and Quantile loss for linear models in \cite{ding2024class}.
Moreover, Assumption~\ref{asm2} is similar but also weaker than the quasi-convexity in \cite{hazan2015beyond} and the weak quasi-convexity assumption in \cite{hardt2018gradient}.
\end{remark}

\begin{remark}
It is easy to verify that for linear system with mean-squares loss, Assumption \ref{asm2} holds with $\delta=2$, $c_1=4$ and $c_2=0$.
Beyond this, Assumption~\ref{asm2} also accommodates a broad class of nonlinear dynamical system.  We provide several illustrative examples below.

\medskip
\noindent\textit{Example 1: Convex squared hinge loss.}

Consider a label $y_{k+1}=\operatorname{sgn}(\phi_k^{\tau}\theta^*)$ with feature vector $\phi_k\in\mathbb{R}^d$ and the true parameter $\theta^* \in \mathbb{R^d}$.
The squared hinge-loss, which is widely used in machine learning (cf., \cite{steinwart2003sparseness}), is defined as 
$$J_k(\theta)=\left(\max\left\{0, 1-y_{k+1}\phi_k^{\tau}\theta\right\}\right)^2,$$
which is clearly a convex function with respect to $\theta$. Moreover, since $J_k(\theta^*) = 0$ by definition, Assumption~\ref{asm2} is satisfied with $\delta = 1$, $c_1 = 4$, and $c_2 = 0$.

\medskip
\noindent\textit{Example 2: Mean-square loss of saturated observations.}

Consider the saturation dynamical system widely used in engineering and social science (cf., \cite{sun2004aftertreatment, wang2022applications}):
$$y_{k+1} = g(\phi_k^\tau \theta^* + e_{k+1}),$$ 
where $$g(x) = L\,\mathbb{I}_{[x \le L]} + x\,\mathbb{I}_{[L < x < U]} + U\,\mathbb{I}_{[x \ge U]},$$
with $x,L,U \in \mathbb{R}$ with $-\infty < L < U < \infty$, and the noise $e_{k+1} \sim N(0,1)$. 
Consider the loss function as $J_k(\theta)=(f(\phi_k,\theta^*)-f(\phi_k,\theta))^2$ with the optimal predictor defined as 
$$f(\phi_k, \theta^*) = \mathbb{E}[g(\phi_k^\tau \theta^* + e_{k+1}) \mid \mathcal{F}_k]\triangleq G(\phi_k^\tau \theta^*),$$
where $G(x) = U + (L-x)F_v(L-x) - (U-x)F_v(U-x) + f_v(L-x) - f_v(U-x)$, and $F_v(\cdot)$ and $f_v(\cdot)$ denote the cumulative distribution function and the probability density function of the standard normal distribution, respectively. 
The Hessian matrix of $J_k(\theta)$ is given by
$$
\begin{aligned}
\nabla^2 J_k(\theta) =& 2 [G'(\phi_k^\tau \theta)]^2 \phi_k \phi_k^\tau \\
&- 2 [G(\phi_k^\tau \theta^*) - G(\phi_k^\tau \theta)] G''(\phi_k^\tau \theta) \phi_k \phi_k^\tau,
\end{aligned}
$$ 
where $$G'(\phi_k^\tau \theta) = F_v(U - \phi_k^\tau \theta) - F_v(L - \phi_k^\tau \theta) > 0,$$ and $$G''(\phi_k^\tau \theta) = f_v(L - \phi_k^\tau \theta) - f_v(U - \phi_k^\tau \theta).$$ 
On the one hand, there exists a constant $M_1 > 0$ such that
$\nabla^2 J_k(\theta)$ is not positive semidefinite whenever
$\phi_k^\tau \theta < -M_1$, which implies that $J_k(\theta)$ is nonconvex.
On the other hand, suppose that there exists a constant $M_2 > M_1$ such that
$|\phi_k^\tau \theta| \le M_2$ for all $k \ge 0$. Then Assumption~\ref{asm2}
is satisfied with $\delta = \underline{c}$, $c_1 = 4$, and $c_2 = 0$, where
\[
0 < \underline{c}
= \inf_{|x| \le M_2}\{F_v(U - x) - F_v(L - x)\} < 1,
\]
because the distribution function $F_v(\cdot)$ is continuous and strictly increasing.

\medskip
\noindent\textit{Example 3: Cross-entropy loss of logistic regression.}

Consider the following logistic regression model for the classification tasks in machine learning (cf., \cite{bishop2006pattern}):
$$y_{k+1}=f(\phi_k^{\tau}\theta^{*})+e_{k+1},$$
where $f(x)=\frac{1}{1+\exp(-x)}$, and $e_{k+1}\sim N(0,1)$.
The loss function is chosen as
$$
\begin{aligned}
&\mathcal{L}(f(\phi_k^{\tau}\theta^{*}),f(\phi_k^{\tau}\theta) )\\
=&-f(\phi_k^{\tau}\theta^{*})\log(f(\phi_k^{\tau}\theta))-(1-f(\phi_k^{\tau}\theta^{*}))\log(1-f(\phi_k^{\tau}\theta)).
\end{aligned}
$$
Suppose that there exists a constant $M<\infty$ such that
$|\phi_k^\tau \theta| \le M$ for all $k \ge 0$, then it is clear that Assumption \ref{asm2} is satisfied with $\delta=1$, $c_1=\frac{(1+e^M)^4}{2e^{2M}}$ and $c_2=0$.

Besides the above nonlinear examples, the following one illustrates that after a suitable nonlinear transformation, a highly nonlinear model that does not satisfy Assumption \ref{asm2} can be transferred to one that does:

\medskip
\noindent\textit{Example 4: Mean-square loss of neural network with quadratic activation.}

Consider the two-layer neural network model (cf., \cite{du2018power}):
\[
y_{k+1}=\sum\limits_{i=1}^{m_1}a_i\left(\sum\limits_{j=1}^{m_2}b_{ij}(c_{ij}^{\tau}\phi_k)^2\right)^2+w_{k+1}.\]
where $a_i,b_{ij}\in\mathbb{R}$ and
$c_{ij}\in\mathbb{R}^d$ are unknown parameters.
Define $\theta_i=\sum_{j=1}^{m_2}b_{ij}\,c_{ij}\otimes c_{ij},\
\theta^*=\sum_{i=1}^{m_1}a_i\,\theta_i\otimes\theta_i,
\
\varphi_k=\phi_k\otimes\phi_k,\
\Phi_k=\varphi_k\otimes\varphi_k.$
Then by the property of Kronecker product, we obtain 
\[
y_{k+1}=\theta^{*\tau}\Phi_k+w_{k+1}.
\]
Therefore, as for the mean–square loss $J_k(\theta)$, it is easy to verify that Assumption \ref{asm2} holds with $\delta=2$, $c_1=4$ and $c_2=0$.
\end{remark}

\section{Main Results} \label{main results}
In this section, we first present the new algorithm based on the model \eqref{sclassmodel}, and then introduce the main theorems.

\subsection{Adaptive Prediction}
Let $\theta_k$ be the estimate of $\theta$ at the $k$-th step, recursively defined by the following SG algorithm:
\begin{equation}\label{sgalg}
\begin{aligned}
&\theta_{k+1}=\theta_k-\mu_k\nabla f(\phi_k,\theta_k)\nabla_x\mathcal{L}(y_{k+1}, f(\phi_k,\theta_k)),\\
&\mu_k=\frac{\mu}{r_k^{\beta_1}\log^{\beta_2}r_k+\|\nabla f(\phi_k,\theta_k)\|^2},\\
&r_k=\beta_3+\sum\limits_{t=1}^k\|\nabla f(\phi_t,\theta_t)\|^2,
\end{aligned}
\end{equation}
where the initial value $\theta_0$ can be chosen arbitrarily, and the step-size scalar $\mu\in(0,\min\{1,2\delta c_1^{-1}\})$, The parameters \(\beta_1 \in [\tfrac{1}{2},1]\), \(\beta_2 \ge 0\), and \(\beta_3 >1\) are tunable non-negative constants that can be selected according to the desired algorithmic performance. The gain \(\mu_k\) is designed to accelerate parameter convergence while maintaining stable update directions:
the term \(r_k^{\beta_1}\log^{\beta_2} r_k\) with a smaller \(\beta_1\) may speed up the convergence of the parameter estimates,  whereas the term \(\|\nabla f(\phi_k,\theta_k)\|^2\) contributes to stability by preventing the current update from becoming unbounded since the factor \(r_k^{\beta_1}\log^{\beta_2} r_k\) alone may not sufficiently constrain its growth.  For linear systems, (\ref{sgalg}) is exactly the standard SG algorithm in \cite{guo1985strong} with $\mu_k=\frac{\mu}{r_k}$.

In the following theorem, we establish the asymptotic upper bound for the prediction-based loss (\ref{loss}) of the SG algorithm (\ref{sgalg}). 

\begin{theorem}\label{theorem1}
Under Assumptions \ref{asm1}--\ref{asm2}, if $\beta_3>1$ and 
either $\beta_1=\tfrac{1}{2},\,  \beta_2>\tfrac{1}{2}$ 
or $\beta_1 \in (\tfrac{1}{2},1]$, $\, \beta_2=0$, then there exists a finite random variable $c$ such that $$\lim\limits_{n\to\infty} \|\tilde{\theta}_n\|= c<\infty, \ \hbox{a.s.,}$$
where $\tilde\theta_k=\theta_k-\theta^*$. Moreover, if for any compact set $D$, there exists a non-decreasing positive and deterministic sequence $\{d_n\}$ such that
\begin{equation*}
\sup_{\theta\in D}\|\nabla f(\phi_n,\theta)\|^2=O(d_n), \ \hbox{a.s.,} 
\end{equation*}
then we have 
\begin{equation*}\label{eq:main}
\sum_{k=1}^n \mathcal{L}(f(\phi_k,\theta^*),f(\phi_k,\theta_k)) 
= o\left(r_n^{\beta_1} \log^{\beta_2}r_n+d_n\right), \quad \text{a.s.}
\end{equation*}
\end{theorem}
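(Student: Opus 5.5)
Our plan is a Lyapunov argument on $V_k=\|\tilde\theta_k\|^2$, combined with the Robbins--Siegmund almost-supermartingale convergence theorem and a martingale estimate for the noise cross term. Write $g_k=\nabla f(\phi_k,\theta_k)$, $\ell_k=\nabla_x\mathcal{L}(f(\phi_k,\theta^*),f(\phi_k,\theta_k))$ and $a_k=r_k^{\beta_1}\log^{\beta_2}r_k$. By \eqref{www}, the update direction splits as $g_k(\ell_k+w_{k+1})$. By the chain rule, $\nabla J_k(\theta_k)=g_k\ell_k$.

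Expanding $\tilde\theta_{k+1}=\tilde\theta_k-\mu_k g_k(\ell_k+w_{k+1})$ gives
\begin{equation*}
V_{k+1}=V_k-2\mu_k\nabla J_k(\theta_k)^{\tau}\tilde\theta_k-2\mu_k g_k^{\tau}\tilde\theta_k w_{k+1}+\mu_k^2\|g_k\|^2(\ell_k+w_{k+1})^2 .
\end{equation*}
We bound the terms as follows.
\begin{itemize}
\item By \eqref{fun1}, the second term is at most $-2\delta\mu_kJ_k(\theta_k)$.
\item Since $\mu_k\|g_k\|^2\le\mu$ (this is exactly why $\|g_k\|^2$ sits in the denominator), the $\ell_k^2$ part of the last term is at most $2\mu\mu_k(c_1J_k+c_2)$ by \eqref{fun2}.
\end{itemize}
The sharper route keeps the cross term $2\mu_k^2\|g_k\|^2\ell_kw_{k+1}$, which is a martingale difference, and uses $\mu c_1<2\delta$ to retain a strictly negative coefficient $-(2\delta-\mu c_1)\mu_kJ_k$. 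The remaining error terms are $\mu_k^2\|g_k\|^2(c_2+w_{k+1}^2)$.

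Take conditional expectations. Since $\mu_k$ and $g_k$ are $\mathcal{F}_k$-measurable, this yields
\begin{equation*}
\mathbb{E}[V_{k+1}\mid\mathcal{F}_k]\le V_k-(2\delta-\mu c_1)\mu_kJ_k(\theta_k)+(c_2+\sigma^2)\mu_k^2\|g_k\|^2 .
\end{equation*}
The key summability step is $\sum_k\mu_k^2\|g_k\|^2<\infty$, which holds pathwise. We have $\mu_k^2\|g_k\|^2\le\mu^2\|g_k\|^2/a_k^2$. Because $r_k-r_{k-1}=\|g_k\|^2$ and $a_k\ge a_{k-1}$, the standard integral comparison gives $\sum\|g_k\|^2/a_k^2\le\int_{\beta_3}^\infty dx/(x^{2\beta_1}\log^{2\beta_2}x)$. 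This integral is finite precisely when $\beta_1>\frac12,\beta_2=0$ or $\beta_1=\frac12,\beta_2>\frac12$; the condition $\beta_3>1$ keeps $\log r_k>0$. One subtlety is that $r_k$ includes the current term $\|g_k\|^2$, so $a_k$ is the right-endpoint value; this is the favorable direction for the comparison. Robbins--Siegmund then gives $V_k\to$ a finite limit a.s., which is the first claim. It also gives $\sum_k\mu_kJ_k(\theta_k)<\infty$ a.s.

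For the accumulated loss, first note that $\{\theta_k\}$ is bounded a.s., since $\|\tilde\theta_k\|$ converges. So on each sample path $\theta_k$ stays in a compact $D$, and the hypothesis gives $\|g_k\|^2=O(d_k)\le O(d_n)$ for $k\le n$, using that $d_n$ is non-decreasing. Hence $\mu_k^{-1}=(a_k+\|g_k\|^2)/\mu\le C(a_n+d_n)$ for $k\le n$, because $a_k$ is non-decreasing in $k$. Now set $b_k=(a_k+d_k)$, which is non-decreasing. We have $\sum_k \frac{J_k}{b_k}\le C\sum_k\mu_kJ_k<\infty$. If $b_n\to\infty$, Kronecker's lemma yields $\sum_{k\le n}J_k=o(b_n)=o(r_n^{\beta_1}\log^{\beta_2}r_n+d_n)$. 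If $b_n$ stays bounded, then $\sum J_k<\infty$ outright, which is interpreted consistently with the $o(\cdot)$ claim, or one can pass to a slightly adjusted statement.

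We expect the main obstacle to be the rigorous handling of the noise terms. The quantity $\mu_k^2\|g_k\|^2w_{k+1}^2$ needs $\sup\mathbb{E}[w_{k+1}^2\mid\mathcal{F}_k]<\infty$ together with the pathwise summability above. The martingale cross terms must be shown convergent: $\sum\mu_k g_k^\tau\tilde\theta_kw_{k+1}$ has conditional variance bounded by $\sigma^2\sum\mu_k^2\|g_k\|^2V_k$. Since $V_k$ is not known to be bounded before the argument concludes, it is cleanest to apply Robbins--Siegmund directly to $V_k$ rather than treating the martingale separately. The $\alpha>2$ moment condition serves as a backup, via the local martingale convergence theorem, if pathwise estimates on $\sum\mu_k^2\|g_k\|^2(w_{k+1}^2-\mathbb{E}[w_{k+1}^2\mid\mathcal{F}_k])$ are needed.
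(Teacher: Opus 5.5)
Your proposal is correct and follows essentially the same route as the paper. Both use the Lyapunov function $\|\tilde\theta_k\|^2$, obtain the drift term $-(2\delta-\mu c_1)\mu_k J_k(\theta_k)$ from \eqref{fun1}, \eqref{fun2} and $\mu_k\|\nabla f(\phi_k,\theta_k)\|^2\le\mu$, get pathwise summability of $\mu_k^2\|\nabla f(\phi_k,\theta_k)\|^2$ by integral comparison, and then apply the Robbins--Siegmund theorem and Kronecker's lemma. Your side remark is also well taken: when $r_n^{\beta_1}\log^{\beta_2}r_n+d_n$ stays bounded, Kronecker's lemma does not apply and the literal $o(\cdot)$ claim can fail, an edge case the paper's proof passes over silently.
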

\begin{pf}
From \eqref{sclassmodel} and \eqref{sgalg}, we have
\begin{equation}\label{tilde}
\begin{aligned}
\tilde\theta_{k+1}=&\tilde\theta_k-\mu_k\nabla f(\phi_k,\theta_k)\nabla_x\mathcal{L}(f(\phi_k,\theta^*),f(\phi_k,\theta_k))\\
&-\mu_k\nabla f(\phi_k,\theta_k)w_{k+1}.
\end{aligned}
\end{equation}
By Assumption \ref{asm2}, we have 
\begin{equation}\label{convex0}
\begin{aligned}
&\nabla_x\mathcal{L}(f(\phi_k,\theta^*),f(\phi_k,\theta))\nabla f(\phi_k,\theta)^{\tau}(\theta-\theta^*)\\
\geq& \delta \mathcal{L}(f(\phi_k,\theta^*),f(\phi_k,\theta)).
\end{aligned}
\end{equation}
So by (\ref{tilde}) and (\ref{convex0}), we have
\begin{equation}\label{ine1}
\begin{aligned}
&\tilde\theta_{k+1}^{\tau}\tilde\theta_{k+1}\\
=&\tilde\theta_{k}^{\tau}\tilde\theta_{k} -2\mu_k\tilde\theta_{k}^{\tau}\nabla f(\phi_k,\theta_k)\nabla_x\mathcal{L}(f(\phi_k,\theta^*),f(\phi_k,\theta_k))\\
&+ \mu_k^2\|\nabla f(\phi_k,\theta_k)\|^2\|\nabla_x\mathcal{L}(y_{k+1},f(\phi_k,\theta_k))\|^2\\
&+2\mu_k^2\|\nabla f(\phi_k,\theta_k)\|^2\nabla_x\mathcal{L}(f(\phi_k,\theta^*),f(\phi_k,\theta_k))w_{k+1}\\
&-2\mu_k\tilde\theta_{k}^{\tau}\nabla f(\phi_k,\theta_k)w_{k+1}+\mu_k^2\|\nabla f(\phi_k,\theta_k)\|^2w_{k+1}^2\\
\leq&\tilde\theta_{k}^{\tau}\tilde\theta_{k}-2\delta\mu_k\mathcal{L}(f(\phi_k,\theta^*),f(\phi_k,\theta_k))\\
&+\mu_k^2\|\nabla f(\phi_k,\theta_k)\|^2(\nabla_x\mathcal{L}(f(\phi_k,\theta^*),f(\phi_k,\theta_k)))^2\\
&+2\mu_k^2\|\nabla f(\phi_k,\theta_k)\|^2\nabla_x\mathcal{L}(f(\phi_k,\theta^*),f(\phi_k,\theta_k))w_{k+1}\\
&-2\mu_k\tilde\theta_{k}^{\tau}\nabla f(\phi_k,\theta_k)w_{k+1}+\mu_k^2\|\nabla f(\phi_k,\theta_k)\|^2w_{k+1}^2\\
=&\tilde\theta_{k}^{\tau}\tilde\theta_{k}+c_2\mu_k^2\|\nabla f(\phi_k,\theta_k)\|^2\\
&-(2\delta-c_1\mu_k\|\nabla f(\phi_k,\theta_k)\|^2)\mu_k\mathcal{L}(f(\phi_k,\theta^*),f(\phi_k,\theta_k))\\
&+2\mu_k^2\|\nabla f(\phi_k,\theta_k)\|^2\nabla_x\mathcal{L}(f(\phi_k,\theta^*),f(\phi_k,\theta_k))w_{k+1}\\
& -2\mu_k\tilde\theta_{k}^{\tau}\nabla f(\phi_k,\theta_k)w_{k+1}+\mu_k^2\|\nabla f(\phi_k,\theta_k)\|^2w_{k+1}^2.
\end{aligned}
\end{equation}
By (\ref{sgalg}), we know that 
\begin{equation}\label{e2}
\begin{aligned}
2\delta-c_1\mu_k\|\nabla f(\phi_k,\theta_k)\|^2\geq 2\delta-c_1\mu:=\underline \mu>0,
\end{aligned}
\end{equation}
and 
\begin{equation}\label{e1}
\begin{aligned}
& \sum\limits_{k=1}^{\infty}\mu_k^2\|\nabla f(\phi_k,\theta_k)\|^2
\leq\mu^2\sum\limits_{k=1}^{\infty}\frac{r_k-r_{k-1}}{r_k^{2\beta_1} \log^{2\beta_2}r_k}\\
\leq& \mu^2\sum\limits_{k=1}^{\infty}\int_{r_{k-1}}^{r_k}\frac{1}{x^{2\beta_1}\log^{2\beta_2}x} dx<\infty,
\end{aligned}
\end{equation}
where we use the fact $\beta_1=\frac{1}{2}$, $\beta_2>\frac{1}{2}$ or $1\geq \beta_1>\frac{1}{2}$, and $\beta_3>1$.

From (\ref{ine1}) and (\ref{e2}), we can obtain
\begin{equation}
\begin{aligned}
\mathbb{E}\left[\tilde\theta_{k+1}^{\tau}\tilde\theta_{k+1}|\mathcal{F}_k\right]\leq&\tilde\theta_{k}^{\tau}\tilde\theta_{k}-\underline \mu \mu_k\mathcal{L}(f(\phi_k,\theta^*),f(\phi_k,\theta_k))\\
&+(\sigma^2+c_2)\mu_k^2\|\nabla f(\phi_k,\theta_k)\|^2.
\end{aligned}
\end{equation}
Hence, by the Robbins–Siegmund theorem (cf., Theorem 1.3.2 in \cite{guo2020tima}) and \eqref{e1}, we can obtain that 
\begin{equation}\label{r1}
\|\tilde\theta_k\|\to c <\infty, \ \hbox{a.s.}
\end{equation}
and 
\begin{equation}\label{r2}
\begin{aligned}
&\sum\limits_{k=1}^{\infty}\mu_k\mathcal{L}(f(\phi_k,\theta^*),f(\phi_k,\theta_k))\\
=&\sum\limits_{k=1}^{\infty}\frac{\mu\mathcal{L}(f(\phi_k,\theta^*),f(\phi_k,\theta_k))}{r_k^{\beta_1}\log^{\beta_2}r_k+\|\nabla f(\phi_k,\theta_k)\|^2}=O(1), \ \hbox{a.s.}
\end{aligned}
\end{equation}

By Kronecker lemma (cf., Theorem 1.2.14 in \cite{guo2020tima}), we know that 
\begin{equation}
\begin{aligned}
&\sum\limits_{k=1}^n\mathcal{L}(f(\phi_k,\theta^*),f(\phi_k,\theta_k))\\
=&o\left(r_n^{\beta_1}\log^{\beta_2}r_n+\sup_{0\leq k\leq n}\|\nabla f(\phi_k,\theta_k)\|^2\right)\\
=&o\left(r_n^{\beta_1}\log^{\beta_2}r_n+d_n\right), \ \hbox{a.s.}
\end{aligned}
\end{equation}
This completes the proof.
\end{pf}
\begin{remark}
In contrast to the PE condition commonly assumed in the literature (e.g., \cite{lai1991adaptive}), which requires the matrix $\frac{1}{n}\nabla f(\phi_k,\theta)\nabla f(\phi_k,\theta)^{\tau}$ converges to a positive definite matrix, Theorem~\ref{theorem1} provides the upper bound of the prediction-based loss between the optimal predictor and the adaptive predictor without any excitation condition on the data.
Furthermore, compared to the upper bound established for the standard SG algorithm (e.g., \cite{chen2012identification}), i.e., $o(r_n)$, we are able to provide a faster convergence rate result with a modified adaptation gain scalar $\mu_k$. Specifically, we obtain $o\left(r_n^{1/2}\log^{\beta_2} r_n\right)$, provided that $\|\nabla f(\phi_k,\theta_k)\|$ is uniformly bounded, $\beta_1 = \tfrac{1}{2}$, and $\beta_2 > \tfrac{1}{2}$.
\end{remark}

\subsection{Adaptive Control}
Consider the following discrete-time nonlinear regression model:
\begin{equation}\label{modelf}
y_{k+1}=g(\phi_k,\theta^*,e_{k+1}),
\end{equation}
where $\phi_k=[y_k,\cdots, y_{k-p+1}, u_k,\cdots, u_{k-q+1}]$. Besides, the noise sequence $w_{k+1}$ in (\ref{www}) satisfies that
\begin{equation}
w_k^2=O(\alpha_k), \ \hbox{a.s.,}
\end{equation}
where $\{\alpha_k\}$ be positive non-decreasing deterministic sequence with $\alpha_{k+1}=O(\alpha_k)$.

It can be seen that under Assumption \ref{asm1}, $\alpha_k$ can be chosen as 
\begin{equation}
\alpha_k=k^{\varepsilon}, \ \forall \varepsilon\in\left(\frac{2}{\alpha},1\right),
\end{equation}
where $\alpha$ is given in Assumption \ref{asm1}.

Our objective is to design a feedback control law based on the input-output measurement to make the system outputs track a deterministic and bounded sequence $\{y_{k}^*\}$. In order to analyze this control problem, we assume that
\begin{assumption}\label{asm3}
There exists a constant $\lambda\in(0,1)$ such that 
\begin{equation}
u_{k-1}^2=O\left(\sum\limits_{t=0}^{k}\lambda^{k-t}(y_t^2+w_t^2)\right).
\end{equation}
\end{assumption}

\begin{assumption}\label{asm4}
There exist constants $K_1$, $K_2>0$ such that for bounded $\theta$,
\begin{equation}
\|\nabla_{\theta} f(\phi,\theta)\|\leq K_1+K_2\|\phi\|, \ \ \forall \phi\in\mathbb{R}^{p+q}.
\end{equation}
\end{assumption}

\begin{remark}
We remark that Assumptions \ref{asm1}-\ref{asm4} correspond to the standard assumption in the linear case (e.g., \cite{chen2012identification}). Besides, the model (\ref{modelf}) with a minimum phase condition in Assumption \ref{asm3} and a linearly-growing nonlinearity in Assumption \ref{asm4} can contain the case investigated in \cite{xie2000adaptive}. Furthermore, the linear growth condition (20) cannot be relaxed to a sup-linear condition, as has been demonstrated in Corollary 3.3.1 of \cite{xie2000adaptive}.
\end{remark}

Let $\mathcal{L}(\cdot,\cdot)$ in (\ref{loss}) denote the classical mean-squares loss used in adaptive control. From Remark (\ref{rm1}), the optimal predictor is given by $f(\phi_k,\theta^*)=\mathbb{E}\left[y_{k+1}|\mathcal{F}_k\right]$. 
Based on the parameter estimates generated by the SG algorithm (\ref{sgalg}) and according to the ``certainty equivalence'' principle, we can choose the adaptive control $u_k$ to be the solution of the equation
\begin{equation}\label{con}
f(\phi_k,\theta_k)=y_{k+1}^*.
\end{equation}

The following theorem gives the convergence rate of the closed-loop control error:
\begin{theorem}\label{theorem2}
Let Assumptions \ref{asm1}--\ref{asm4} be satisfied, and let \(\beta_1\), \(\beta_2\) and \(\beta_3\) satisfy the same conditions as those in Theorem \ref{theorem1}.  Then we have 
\begin{equation}\label{key}
\frac{1}{n}\sum\limits_{k=0}^n\left(\mathbb{E}\left[y_{k+1}|\mathcal{F}_k\right]-y_{k+1}^*\right)^2=o\left(\frac{\log^{\beta_2}n}{n^{1-\beta_1}}+\frac{\alpha_n}{n}\right), \ \hbox{a.s.}
\end{equation}
\end{theorem}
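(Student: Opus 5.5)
Under the certainty-equivalence law \eqref{con} we have $f(\phi_k,\theta_k)=y^*_{k+1}$. With the mean-squares loss, $f(\phi_k,\theta^*)=\mathbb{E}[y_{k+1}|\mathcal{F}_k]$, so the quantity to bound in \eqref{key} is exactly
\[
\sum_{k=0}^n \bigl(f(\phi_k,\theta^*)-f(\phi_k,\theta_k)\bigr)^2=\sum_{k=0}^n J_k(\theta_k).
\]
The plan is therefore to reduce Theorem~\ref{theorem2} to Theorem~\ref{theorem1}, which gives this sum as $o(r_n^{\beta_1}\log^{\beta_2}r_n+d_n)$. It then remains to show that, in closed loop, $r_n=O(n)$ and that $\sup_{k\le n}\|\nabla f(\phi_k,\theta_k)\|^2=O(\alpha_n)$, up to terms of order $o(n)$. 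One caveat: Theorem~\ref{theorem1} is stated with a deterministic $d_n$. I will instead use the intermediate bound from its proof, namely $o\bigl(r_n^{\beta_1}\log^{\beta_2}r_n+\sup_{k\le n}\|\nabla f(\phi_k,\theta_k)\|^2\bigr)$, which follows from \eqref{r2} and the Kronecker lemma.

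\emph{Step 1: boundedness and an a priori error bound.} By the first part of Theorem~\ref{theorem1}, $\|\tilde\theta_k\|$ converges a.s., so $\{\theta_k\}$ lies in a (random) compact set. Assumption~\ref{asm4} then gives
\[
\|\nabla f(\phi_k,\theta_k)\|^2\le 2K_1^2+2K_2^2\|\phi_k\|^2 .
\]
Write $y_{k+1}=y^*_{k+1}+\bigl(f(\phi_k,\theta^*)-f(\phi_k,\theta_k)\bigr)+v_{k+1}$, where $v_{k+1}=y_{k+1}-\mathbb{E}[y_{k+1}|\mathcal{F}_k]$; for the squared loss, $w_{k+1}=-2v_{k+1}$. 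Since $y^*$ is bounded,
\[
y_{k+1}^2=O\bigl(1+J_k(\theta_k)+w_{k+1}^2\bigr).
\]
By Assumption~\ref{asm3}, $u_{k-1}^2$ is an exponentially weighted sum of $y_t^2+w_t^2$. Hence
\[
\|\phi_k\|^2=O\Bigl(\sum_{t\le k+1}\lambda^{k+1-t}(y_t^2+w_t^2)\Bigr)=O\Bigl(\sum_{t\le k}\lambda^{k-t}J_t(\theta_t)+\sum_{t\le k+1}\lambda^{k-t}w_t^2+1\Bigr).
\]

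\emph{Step 2: closing the loop (main obstacle).} Let $S_n=\sum_{k\le n}J_k(\theta_k)$. Summing the Step~1 bound over $k$ gives
\[
r_n=O\Bigl(n+S_n+\sum_{k\le n+1} w_k^2\Bigr)=O(n+S_n),
\]
where the martingale estimate $\sum_{k\le n}w_k^2=O(n)$ a.s. follows from Assumption~\ref{asm1} via the moment condition with $\alpha>2$. Likewise, using $w_t^2=O(\alpha_t)$,
\[
\max_{k\le n}\|\nabla f(\phi_k,\theta_k)\|^2=O\Bigl(1+\max_{k\le n}J_k(\theta_k)+\alpha_{n+1}\Bigr).
\]
Here $\max_{k\le n}J_k(\theta_k)\le S_n$ by nonnegativity of the squared loss, and $\alpha_{n+1}=O(\alpha_n)$. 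This circularity is the delicate point, because the sup term then contributes $o(S_n)$ to the bound on $S_n$ itself.

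To resolve it, I plug these estimates into the refined bound from Theorem~\ref{theorem1}:
\[
S_n=o\bigl((n+S_n)^{\beta_1}\log^{\beta_2}(n+S_n)+S_n+\alpha_n\bigr).
\]
Since $\beta_1\le1$, for large $n$ the term $o(S_n)$ is absorbed into the left-hand side. If $\beta_1<1$ the power term is absorbed as well; if $\beta_1=1$ (so $\beta_2=0$) it is $o(n+S_n)$, which is absorbable in the same way. This yields $S_n=O(n+\alpha_n)=O(n)$, hence $r_n=O(n)$. Substituting back gives
\[
S_n=o\bigl(n^{\beta_1}\log^{\beta_2}n+\alpha_n\bigr),
\]
where the remaining $o(S_n)$ term has been absorbed again. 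Dividing by $n$ yields \eqref{key}.

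I expect the main care to be needed in making the sup-term absorption rigorous. Absorbing $o(S_n)$ requires bounding $\max_{k\le n}J_k$ by something strictly smaller than $S_n$, in an $o(\cdot)$ sense uniformly along the sample path. If that fails, the fallback is to rerun the Kronecker step directly. Using \eqref{r2} and splitting the denominator, $r_k^{\beta_1}\log^{\beta_2}r_k+\|\nabla f(\phi_k,\theta_k)\|^2\le r_n^{\beta_1}\log^{\beta_2}r_n+C(1+\alpha_n)+C\sum_{t\le k}\lambda^{k-t}J_t(\theta_t)$. The exponentially weighted $J$-sum can then be controlled via $\sum_k\mu_k J_k<\infty$ together with a Gronwall-type argument.
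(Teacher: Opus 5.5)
Your argument is correct, but it closes the loop differently from the paper. The paper uses two separate stability arguments. First it derives the per-step bound \eqref{boudf}, $\|\nabla f(\phi_k,\theta_k)\|^2=O(\alpha_k)+o(r_k^{\beta_1}\log^{\beta_2}r_k)$; this uses $m_k\to0$ (from \eqref{r2}) to make $L_k=\sum_{t\le k}\lambda^{k-t}y_t^2$ obey an eventually contracting recursion. Next it shows $r_n=O(n)$ from $\Delta_n=S_n/r_n\to0$, via sample averages of $y_k^2$ and $u_k^2$. Only then does it apply the Kronecker lemma with the deterministic weights $n^{\beta_1}\log^{\beta_2}n+\alpha_n$.

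You replace both steps by one self-bounding inequality. Since $\max_{k\le n}J_k\le S_n$ and $\sum_{k\le n}\sum_{t\le k}\lambda^{k-t}J_t\le S_n/(1-\lambda)$, both $r_n$ and $\sup_{k\le n}\|\nabla f(\phi_k,\theta_k)\|^2$ are linear in $S_n$, and the vanishing Kronecker factor does the rest. Your bound also handles the same-time coupling automatically: $\phi_k$ contains $u_k$, which Assumption~\ref{asm3} controls through $y_{k+1}$, hence through $J_k$. In the paper this needs an implicit absorption of $y_{k+1}^2$ between \eqref{27} and \eqref{29}. The paper's route buys the pointwise growth estimate \eqref{boudf}, which is of independent interest; yours is shorter.

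The concern in your last paragraph is unfounded, so the Gronwall fallback (which is essentially the paper's argument) is unnecessary. Set $B_n=r_n^{\beta_1}\log^{\beta_2}r_n+\sup_{k\le n}\|\nabla f(\phi_k,\theta_k)\|^2$. The Kronecker bound reads $S_n\le\epsilon_nB_n$ with $\epsilon_n\to0$. Moreover $B_n\le C(n+S_n+\alpha_n)$ for a random $C$ independent of $n$, since $x^{\beta_1}\log^{\beta_2}x\le C'x$ for $x\ge\beta_3$. Hence $(1-C\epsilon_n)S_n\le C\epsilon_n(n+\alpha_n)$, and you do not need any bound on $\max_kJ_k$ strictly smaller than $S_n$.

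Two small loose ends remain. First, Kronecker requires $B_n\to\infty$, so treat $\sup_nr_n<\infty$ separately; then $\sum_kJ_k<\infty$ and \eqref{key} is immediate. Second, the step $O(n+\alpha_n)=O(n)$ needs $\alpha_n=O(n)$. You can assume this without loss, since $w_n^2\le\sum_{k\le n}w_k^2=O(n)$ a.s. lets you replace $\alpha_n$ by $\min\{\alpha_n,n\}$.
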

\begin{pf}
If $\sup_{n\geq0}r_n<\infty$, from Theorem \ref{theorem1}, we have 
\begin{equation}
\sum\limits_{k=1}^{\infty} \left(f(\phi_k,\theta^*)-f(\phi_k,\theta_k)\right)^2<\infty,\ \hbox{a.s.,}
\end{equation}
then we can obtain (\ref{key}).

We now consider the case $r_n\to\infty$. 
We first prove that
\begin{equation}\label{boudf}
\|\nabla f(\phi_k,\theta_k)\|^2=O(\alpha_k)+o(r_k^{\beta_1}\log^{\beta_2}r_k), \ \hbox{a.s.}
\end{equation}
For this, let 
\begin{equation}
m_k=\frac{(f(\phi_k,\theta^*)-f(\phi_k,\theta_k))^2}{r_k^{\beta_1}\log^{\beta_2}r_k+\|\nabla f(\phi_k,\theta_k)\|^2}.
\end{equation}
By (\ref{modelf}) and (\ref{con}), we know that 
\begin{equation}\label{yyk}
y_{k+1}=f(\phi_k,\theta^*)-f(\phi_k,\theta_k)+y_{k+1}^*+w_{k+1}.
\end{equation}
By Assumptions \ref{asm3}-\ref{asm4} and the boundedness of $\theta_k$, we have 
\begin{equation}\label{27}
\begin{aligned}
&\|\nabla f(\phi_k,\theta_k)\|^2\leq K_1+K_2\|\phi_k\|^2\\
=&O\left(\sum\limits_{t=0}^{k+1}\lambda^{k+1-t}y_t^2\right)+O(\alpha_k).
\end{aligned}
\end{equation}
Moreover, by (\ref{r2}), we have $m_k\to0$ almost surely, then by Assumptions \ref{asm3}-\ref{asm4}, it follows that
\begin{equation}
\begin{aligned}
&y_{k+1}^2\\
\leq& 2(f(\phi_k,\theta^*)-f(\phi_k,\theta_k))^2+O(\alpha_{k+1})\\
\leq&2m_k\left(r_k^{\beta_1}\log^{\beta_2}r_k+\|\nabla f(\phi_k,\theta_k)\|^2\right)+O(\alpha_{k+1})\\
\leq&O\left(m_k\sum\limits_{t=0}^{k}\lambda^{k-t}y_t^2+\alpha_{k+1}\right)+o\left(r_k^{\beta_1}\log^{\beta_2}r_k\right),\ \hbox{a.s.}
\end{aligned}
\end{equation}
Let $L_k=\sum\limits_{t=0}^k\lambda^{k-t}y_t^2$, then we know that there exists some constant $c>0$ such that 
\begin{equation}\label{29}
\begin{aligned}
&L_{k+1}=\lambda L_{k}+y_{k+1}^2\\
\leq& (\lambda+cm_k)L_k+O(\alpha_{k+1})+o\left(r_k^{\beta_1}\log^{\beta_2}r_k\right)\\
=&O(\alpha_{k+1})+o\left(r_k^{\beta_1}\log^{\beta_2}r_k\right), \ \hbox{a.s.}
\end{aligned}
\end{equation}
So by (\ref{27}) and (\ref{29}), we can obtain (\ref{boudf}).

Besides, by (\ref{r2}), $r_n\to\infty$ and $\|\nabla f(\phi_n,\theta_n)\|^2\leq r_n$, we know that
\begin{equation}
\Delta_n=\frac{1}{r_n}\sum\limits_{k=1}^n\left(f(\phi_k,\theta^*)-f(\phi_k,\theta_k)\right)^2=o\left(1\right), \ \hbox{a.s.}
\end{equation}

Furthermore, by Assumption \ref{asm3}, we know that
\begin{equation}
\begin{aligned}
&\frac{1}{n}\sum\limits_{k=1}^n u_{k-1}^2=O\left(\frac{1}{n}\sum\limits_{k=1}^n\sum\limits_{t=0}^{k}\lambda^{k-t}(y_t^2+w_t^2)\right)\\
=&O\left(\frac{1}{n}\sum\limits_{k=0}^n\sum\limits_{i=k}^{n}\lambda^{n-i}(y_k^2+w_k^2)\right)\\
=&O\left( \frac{1}{n}\sum\limits_{k=0}^{n}(y_{k}^2+w_{k}^2)\right)\\
=&O(1)+O\left( \frac{1}{n}\sum\limits_{k=0}^{n}y_{k}^2\right), \ \hbox{a.s.,}
\end{aligned}
\end{equation}
and by (\ref{yyk}), we have \begin{equation}\label{kk1}
\begin{aligned}
&\frac{1}{n}\sum\limits_{k=1}^n y_{k+1}^2\\
=&O(1)+O\left(\frac{1}{n}\sum\limits_{k=1}^n\left(f(\phi_k,\theta^*)-f(\phi_k,\theta_k)\right)^2\right), \ \hbox{a.s.}
\end{aligned}
\end{equation}

Besides, by Assumption \ref{asm4}, we have
\begin{equation}\label{kk2}
\begin{aligned}
&\frac{1}{n}\sum\limits_{k=1}^n\left(f(\phi_k,\theta^*)-f(\phi_k,\theta_k)\right)^2\\
\leq& \frac{\Delta_n}{n} O\left( p \sum_{k=1}^n y_k^2 
+ q \sum_{k=1}^n u_k^2 \right)\\
\leq& \Delta_n O \left( 1 + \frac{1}{n} \sum_{k=1}^n y_{k+1}^2 \right), \ \hbox{a.s.}
\end{aligned}
\end{equation}
From (\ref{kk1}) and (\ref{kk2}), it follows that
\begin{equation}
\frac{1}{n}\sum\limits_{k=1}^n y_{k+1}^2=O(1), \ \hbox{a.s.}
\end{equation}
So we have 
\begin{equation}
r_n=O(n), \ \hbox{a.s.}
\end{equation}
Then by (\ref{r2}) and (\ref{boudf}), we have 
\begin{equation}
\sum\limits_{k=1}^n\left(f(\phi_k,\theta^*)-f(\phi_k,\theta_k)\right)^2=o\left(n^{\beta_1}\log^{\beta_2}n+\alpha_n\right), \ \hbox{a.s.,}
\end{equation}
which completes the proof of Theorem \ref{theorem2}.
\end{pf}

\section{Experiments} \label{Experiment}
In this section, we will demonstrate the effectiveness of the proposed adaptive prediction algorithm via real-world-based experiments, and evaluate the control performance with a numerical simulation.

\subsection{Prediction of Judicial Sentencing based on Real-Data}

The prediction capability of the proposed algorithm is assessed using a real-world judicial sentencing dataset from cases in China. This empirical evaluation not only verifies the algorithm’s usefulness in practice but also offers methodological insights for building high-accuracy sentencing prediction methods within intelligent judiciary systems, which have become a key research focus in China’s efforts to promote judicial fairness. Moreover, given that sentencing data typically exhibit non-i.i.d. characteristics, the proposed algorithm, which requires no independence data assumptions, is naturally well-suited to this application.

Therefore, a dataset consisting of 87,588  intentional-injury cases was obtained from China Judgments Online, spanning the period from 2019 to 2024. 
Using this dataset, we compared the prediction accuracy of the proposed algorithm with that of the classical stochastic gradient algorithm (e.g., \cite{guo1985strong}). The performance evaluation was carried out based on the following practical criterion:
\begin{gather}\label{sentencing_loss}
     \frac{1}{T}\sum_{k=1}^{T} \frac{|y_{k+1}-\hat{y}_{k+1}|}{y_{k+1}},
\end{gather}
\noindent where \( T \) denotes the total count of sentencing cases.

The proposed algorithm adopts the following configuration: the noise sequence $\{\epsilon_k\}$ is independent and follows the distribution $\mathcal{N}(0,25)$, the hyperparameters are set to be $\beta_1 = 0.5$, $\beta_2 = 0.51$, $\beta_3 = 2$, and $\mu = 0.3$, and initial parameter $\theta_0$ is initialized at the zero vector. 

A comparison of the prediction accuracy (\ref{sentencing_loss}) between our proposed SG algorithm (\ref{sgalg}) and the classical SG algorithm is shown in Fig. \ref{predic error}, which clearly illustrates the superiority of our algorithm.

\begin{figure}[ht]
    \centering    \includegraphics[width=0.435\textwidth]{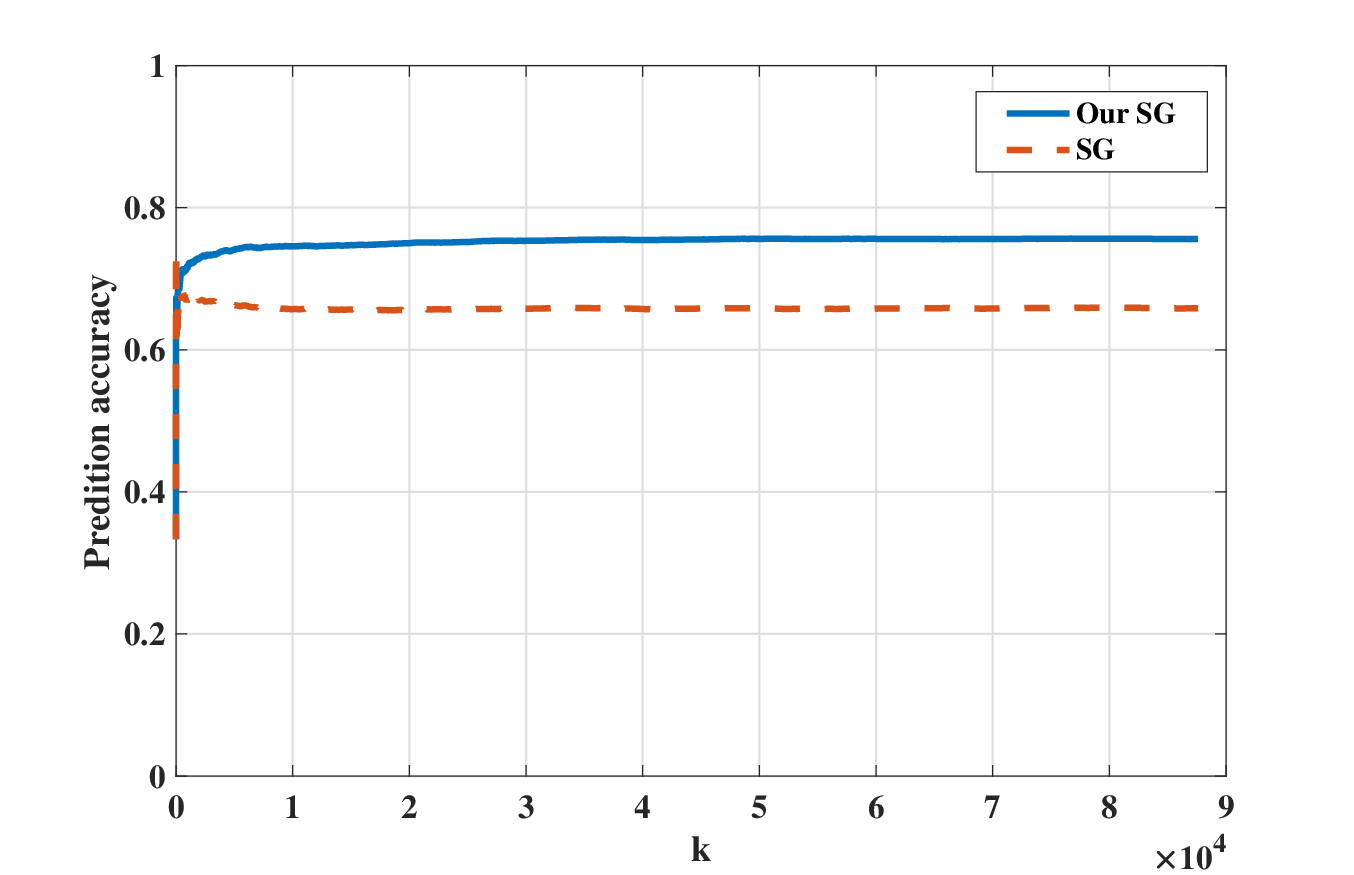}  
    \caption{Comparison of prediction accuracy.}
    \label{predic error}
\end{figure}

\subsection{Prediction and Control for a Simulation Example}
Consider the following stochastic nonlinear system:
\begin{equation}\label{control simulation}
    y_{k+1} = \tanh(\phi_k^{\tau}\theta^*) + w_{k+1},
\end{equation}
where $\tanh(x)=\frac{e^{x}-e^{-x}}{e^{x}+e^{-x}}$, $x \in \mathbb{R}$, $\phi_k=[y_k,y_{k-1}, $ $y_{k-2}$ $,u_k,u_{k-1}]^{\tau}$, and $\theta^*=[0.01,3,-0.1,0.6,-0.3]^{\tau}$. The initial conditions are $y_k=u_k=w_{k+1}=0$ for all $k<0$. The disturbance $\{w_{k+1}\}$ is i.i.d. with $w_{k+1}\!\sim\!\mathcal{N}(0,0.05^2)$. It can be verified that system \eqref{control simulation} does satisfy Assumptions \ref{asm1}–\ref{asm4}. Therefore, the data generated from this system can be used to evaluate the control performance of the proposed algorithm.

Based on the system \eqref{control simulation}, 5000 samples are generated to illustrate the evolution of the average regret and the control performance of the proposed algorithm. 
To demonstrate its advantages, we compare the prediction and control performance of our proposed SG algorithm (\ref{sgalg}) with that of the classical SG algorithm (e.g., \cite{guo1985strong}) with the step-size $\mu_k=\mu/r_k$. In the simulation, the proposed algorithm is implemented with the hyperparameters $\beta_1 = 1/2$, $\beta_2 = 2/3$, $\beta_3 = 2$, and $\mu = 0.3$. The initial parameter is set as $\theta_0 = 0.01 \times\mathbf{1}_5$, where $\mathbf{1}_5$ denotes a five-dimensional vector of ones. The control target $y_{k+1}^{*}$ is fixed at $0.5$. The SG algorithm uses the same initialization and the same step size $\mu$, with $r_0 = 1$.

As shown in Figure \ref{Average Regret}, the averaged adaptive prediction error, i.e., average regret, denoted by 
$$\frac{1}{k}\sum\limits_{i=0}^k\left(f(\phi_i,\theta^*)-f(\phi_i,\theta_i)\right)^2$$
of the predictors generated by both algorithms decreases to zero as the sample size $k$ increases. Moreover, the predictor associated with the proposed SG algorithm exhibits a faster decay of the average regret, indicating its superior performance. Figure \ref{control error} further demonstrates that the trajectory $y_{k+1}$, which depends on the estimated parameters, approaches the target $y_{k+1}^{*}$ more rapidly under the proposed SG algorithm.

\begin{figure}[ht]
    \centering    \includegraphics[width=0.36\textwidth]{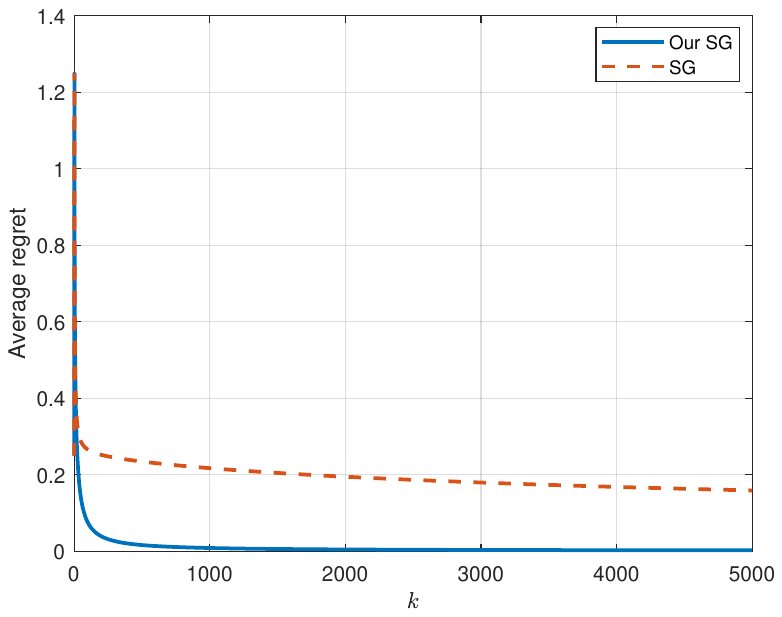}  
    \caption{Comparison of average regret.}
    \label{Average Regret}
\end{figure}

\begin{figure}[ht]
    \centering    \includegraphics[width=0.36\textwidth]{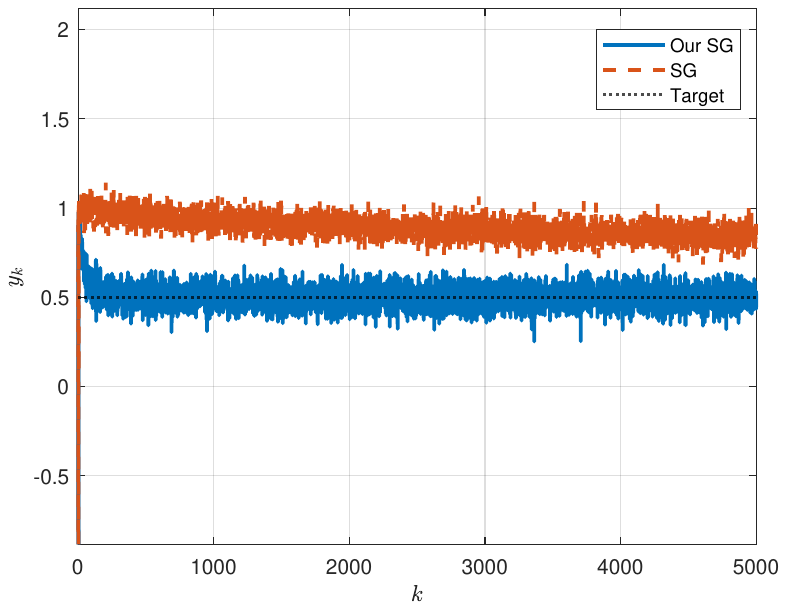}  
    \caption{Comparison of control performance.}
    \label{control error}
\end{figure}

\section{Conclusion} \label{conclusion}
In this paper, we have investigated the performance of adaptive prediction and control for a class of nonlinear stochastic systems under a weak convexity condition. 
To accelerate the convergence rate of the classical stochastic gradient algorithms, we propose a novel nonlinear adaptive identification algorithm to estimate the unknown true parameter. Without any excitation condition on the regressor data, we have established the convergence rate of both the averaged prediction error and the adaptive control error, which is much faster than the classical stochastic gradient algorithm for linear stochastic systems. These findings are consistently supported by numerical simulations as well as real-data experiments. For further investigation, it would be interesting to explore the weakest possible excitation condition for the consistency of our SG algorithm with $\beta_1\in(\frac{1}{2},1)$.

\bibliography{ifacconf}        
\end{document}